\documentclass[manuscript]{acmart}

\usepackage{amsmath,amsthm}

\usepackage{amssymb}
\usepackage{mathtools}
\usepackage{booktabs}
\usepackage{listings}
\usepackage{xcolor}
\usepackage{enumitem}
\usepackage{tikz}
\usetikzlibrary{arrows.meta,positioning,calc}

\setcopyright{acmlicensed}
\copyrightyear{2026}
\acmYear{2026}
\acmConference[NSPW '26]{New Security Paradigms Workshop}{2026}{}
\acmBooktitle{New Security Paradigms Workshop (NSPW '26)}

\theoremstyle{remark}
\newtheorem*{remark*}{Remark}

\newcommand{\N}{\mathbb{N}}

\newcommand{\vulnset}{\mathcal{V}}
\newcommand{\baseset}{\mathcal{B}}

\newcommand{\cwe}[1]{\textsc{CWE-#1}}
\begin{document}

\title{Vulnerability Abundance: A formal proof of infinite vulnerabilities in code} 

\author{E.\ Leverett}
\affiliation{%
  \institution{Concinnity Risks Ltd.}
  \city{Cambridge}
  \country{United Kingdom}
  \orcid{https://orcid.org/0000-0001-6586-7359}
}
\email{eleverett[at]concinnity-risks.com}

\author{J.\ van der Ham-de Vos}
\affiliation{%
  \institution{University of Twente}
  \city{Enschede}
  \country{Netherlands}
  \orcid{https://orcid.org/0000-0002-5685-8714}
}
\email{j.vanderham[at]utwente.nl}

\begin{abstract}
We present a constructive proof that a single C program---the
\emph{Vulnerability Factory}---admits a countably infinite set of
distinct, independently CVE-assignable software vulnerabilities.
We formalise the argument using elementary set theory, verify it
against MITRE's CVE Numbering Authority counting rules, sketch a
model-checking analysis that corroborates unbounded vulnerability
generation, and provide a Turing-machine characterisation that
situates the result within classical computability theory.  We
then contextualise this result within the long-running debate on whether
undiscovered vulnerabilities in software are \emph{dense} or
\emph{sparse}~\cite{geer2003cyberinsecurity, schneier2015howmany,
spring2023undiscovered}, and introduce the concept of
\emph{vulnerability abundance}: a quantitative analogy to
chemical elemental abundance that describes the proportional
distribution of vulnerability classes across the global software
corpus.  Because different programming languages render different
vulnerability classes possible or impossible, and because language
popularity shifts over time, vulnerability abundance is neither
static nor uniform.  Crucially, we distinguish between infinite
\emph{vulnerabilities} and the far smaller set of
\emph{exploits}: empirical evidence suggests that fewer than 6\%
of published CVEs are ever exploited in the wild, and that
exploitation frequency depends not only on vulnerability
abundance but on the market share of the affected software.
We argue that measuring vulnerability abundance---and its
interaction with software deployment---has practical value for
both vulnerability prevention and cyber-risk analysis.
We conclude that if one programme can harbour infinitely many
vulnerabilities, the set of all software vulnerabilities is
necessarily infinite, and we suggest the Vulnerability Factory
may serve as a reusable proof artifact---a foundational
``test object''---for future formal results in vulnerability
theory.  The complete source code is provided in the appendix
under an MIT licence.
\end{abstract}

\maketitle

\section{Introduction}\label{sec:intro}

The question of whether software vulnerabilities are fundamentally
\emph{finite} or \emph{infinite} is not merely academic.  It
determines whether exhaustive patching is a coherent security
strategy or a Sisyphean labour.  Dan Geer framed the dichotomy
starkly: if vulnerabilities are \emph{sparse}, then each one found
and fixed meaningfully reduces exposure; if they are \emph{dense},
then fixing one more is ``essentially irrelevant to
security''~\cite{geer2003cyberinsecurity, geer2014cybersecurity}.

This paper makes five contributions.  First, we exhibit a concrete
programme---a 622-line C artifact called the \emph{Vulnerability
Factory}---and prove rigorously that it can generate countably
infinitely many distinct CVE-class vulnerabilities
(Section~\ref{sec:proof}).  Second, we formalise and generalise the Vulnerability
Factory as a Turing machine and show that its vulnerability-generating
behaviour is a decidable, structurally transparent property, making
it a reusable proof artifact for future formal work
(Section~\ref{sec:utm}).  Third, we introduce the notion of
\emph{vulnerability abundance} (Section~\ref{sec:abundance}), a
framework inspired by chemical elemental abundance that
characterises the proportional distribution of vulnerability types
across the software ecosystem.  Fourth, we carefully distinguish
infinite \emph{vulnerabilities} from the much smaller population
of \emph{exploited vulnerabilities}, and suggest how exploitation frequency depends
on the interaction of vulnerability abundance with software market
share (Section~\ref{sec:exploits}).  Fifth, we argue that
measuring these quantities has a concrete utility for predictive cyber risk (Section~\ref{sec:applications}).

Our result resolves the dense-versus-sparse debate constructively:
we do not merely argue from complexity theory that vulnerabilities
\emph{ought} to be infinite; we exhibit a programme in which they
provably \emph{are} countably infinite.

\section{Background and Related Work}\label{sec:background}

\subsection{Vulnerability Density: Dense or Sparse?}

The foundational paper by Geer et al.~\cite{geer2003cyberinsecurity}---co-authored
with Schneier, Pfleeger, Quarterman, Metzger, Bace, and
Gutmann---argued that Microsoft's operating-system monoculture
created systemic risk precisely because vulnerability density
compounds with market share: what one machine has, so has every
other.  The implicit question was whether the stock of
undiscovered vulnerabilities in a codebase is finite and
declining, or effectively inexhaustible.

While Bishop taught how to differentiate and record vulnerabilities\cite{Bishop1999vulnan}, Anderson explored how hard it is to find bugs over time \cite{anderson2002openclosed}. Ozment and Schechter~\cite{ozment2006milk} confirmed much of that by their study of the OpenBSD codebase over 7.5 years and 15 releases, finding a statistically
significant \emph{decrease} in the rate of foundational
vulnerability reporting---but also a median vulnerability lifetime
of at least 2.6 years. While this all suggests that mature codebases
do improve, it does not demonstrate convergence to zero: the
discovery rate declines, but new code continuously replenishes the
reservoir.

Rescorla~\cite{rescorla2005finding} examined vulnerability
discovery rates for Apache and IIS, modelling them as roughly
linear over time and concluding that finding and fixing
vulnerabilities may not substantially improve security.  This
linear model is consistent with a dense, non-depleting
vulnerability population.

Most recently, Spring and Illari~\cite{spring2023undiscovered}
applied arguments from computability theory---including the
halting problem and Rice's theorem---to conclude that ``there is
no reason to believe undiscovered vulnerabilities are not
essentially unlimited in practice.''  Our constructive proof
complements Spring's theoretical argument with an explicit,
executable, pedagogical, example.

\subsection{Security Economics}

Anderson~\cite{anderson2001why} established the field of security
economics by demonstrating that security failures are often
misaligned-incentive problems rather than purely technical ones.
Anderson and Moore~\cite{anderson2005economics} and Anderson and
Schneier~\cite{schneier2005economics} developed this programme
further, showing that vulnerability persistence has economic
explanations: the costs of exploitation are externalised, and
defenders lack information about the vulnerability population proportions, the very vulnerability abundance we address later.

Anderson's \emph{Security Engineering}~\cite{anderson2020security}
synthesises two decades of this work into a comprehensive textbook, 
observing that companies build vulnerable systems and governments look the other way
because the economics reward precisely this behaviour.  Our
notion of vulnerability abundance extends this economic framing:
if we can quantify \emph{which} vulnerability types dominate, we
can better align incentives toward the most impactful preventions.

\subsection{A brief diversion into CVE Assignment Rules}\label{sec:assignment}

The Common Vulnerabilities and Exposures (CVE) system, maintained
by MITRE, assigns unique identifiers to publicly known
vulnerabilities in published software.  The \emph{CVE Counting
Rules}~\cite{mitre2024counting} specify that distinct
vulnerabilities in distinct software components receive distinct
CVE identifiers.  Key criteria include: (1)~the vulnerability
must be independently fixable; (2)~it must affect an identifiable
codebase or component; and (3)~if a single bug type appears in
two separate products, each receives its own CVE.  These rules
are central to our proof: each generated module constitutes a
distinct component with independently fixable vulnerabilities,
satisfying the criteria for separate CVE assignment.

\subsection{Formal Methods in Security}

Formal verification---model checking, theorem proving, and
abstract interpretation---has long been applied to
security-critical software~\cite{basin2023formal}.  While these
methods can prove the \emph{absence} of specific bug classes in
bounded systems, Rice's theorem guarantees that no general
procedure can decide arbitrary semantic properties of
programs~\cite{rice1953classes}.  Our Vulnerability Factory is
designed to be trivially analysable though: the vulnerabilities are not
hidden or obfuscated but intentionally transparent, making formal
verification a mathematical confirmation rather than needing to run code to 
verify the number of vulnerabilities (Section~\ref{sec:formal}).

\subsection{Exploitation Rates and Prediction}

Not all vulnerabilities are exploited though, and this is important even when they are infinite.  Jacobs et
al.~\cite{jacobs2021epss} developed the Exploit Prediction
Scoring System (EPSS) at FIRST.org, a data-driven model that estimates the
probability of a CVE being exploited in the wild within 30 days.
Empirical studies consistently find that exploitation is rare
relative to the vulnerability population:
Kenna Security~\cite{kenna2022exploit}
found 2.6\% of tracked vulnerabilities exploited in 2019; and
the Cyentia Institute~\cite{cyentia2024visual} estimated
approximately 6\%.  The RAND Corporation's landmark study on
zero-day vulnerabilities~\cite{ablon2017zerodays} found that the
average zero-day lifespan was 6.9 years, with a median of 22 days
to develop a functioning exploit. The discrepencies with these percentages have less to do with scientific dispute, and more to do when the studies were run. Since year on year growth of vulnerabilities ranges from 38\% to 61\%, it's simply this growth over rather static exploitation numbers that defines the ranging values of these percentages. In short, we expect it to continue falling as we find more vulnerabilities that no one ever bothers to exploit heavily in the wild.

These figures are essential context for our result: proving that
vulnerabilities are infinite does \emph{not} prove that exploits
are infinite or that exploitation is unbounded. We should clearly spend more of our scientific energy predicting what vulnerabilities, software, networks, and organisations are most likely to be exploited. It is this differential cyber risk that can teach us the most...and yet we celebrate people who find these vulnerabilities instead of those who eliminate whole classes of them. If finding vulnerabilities is so laudable, then let us make a programme with infinite vulnerabilities; a transcendent weird machine to make our arguments concrete.

\section{The Vulnerability Factory}\label{sec:factory}

The Vulnerability Factory is a self-contained C programme
(\texttt{vuln\_factory.c}, 622~lines; full source in
Appendix~\ref{app:code}, released under the MIT licence) with two
components:

\paragraph{Base Set $\baseset$.}
Eleven functions, each containing exactly one classic
vulnerability drawn from a distinct CWE class:

\begin{center}
\small
\begin{tabular}{@{}cl@{}}
\toprule
\textbf{ID} & \textbf{CWE Class} \\
\midrule
$b_1$  & \cwe{121} Stack Buffer Overflow \\
$b_2$  & \cwe{122} Heap Buffer Overflow \\
$b_3$  & \cwe{134} Format String \\
$b_4$  & \cwe{190} Integer Overflow \\
$b_5$  & \cwe{416} Use After Free \\
$b_6$  & \cwe{415} Double Free \\
$b_7$  & \cwe{78} OS Command Injection \\
$b_8$  & \cwe{367} TOCTOU Race \\
$b_9$  & \cwe{476} NULL Pointer Deref \\
$b_{10}$ & \cwe{457} Uninitialised Variable \\
$b_{11}$ & \cwe{22} Path Traversal \\
\bottomrule
\end{tabular}
\end{center}

\paragraph{Generator $G$.}
On each execution, $G$ reads a persistent counter $n \in \N$,
emits a new C source file \texttt{vuln\_module\_$n$.c} containing
five parameterised vulnerabilities, compiles it into a shared
library, and increments $n$.  Each module $M_n$ contains:

\begin{center}
\small
\begin{tabular}{@{}cl@{}}
\toprule
\textbf{ID} & \textbf{CWE / Parameterisation} \\
\midrule
$v_{n,1}$ & \cwe{121}: buffer size $= 16 + n$ \\
$v_{n,2}$ & \cwe{134}: format string in module $n$ \\
$v_{n,3}$ & \cwe{190}: threshold $= \texttt{INT\_MAX} - n$ \\
$v_{n,4}$ & \cwe{416}: allocation size $= 8 + n$ \\
$v_{n,5}$ & \cwe{78}: injection in module $n$ context \\
\bottomrule
\end{tabular}
\end{center}

\noindent
The parameterisation by $n$ ensures that buffer sizes, overflow
thresholds, heap layouts, and exploit payloads differ across
modules.

Our construction here of a C programme as a proof of existence, is inspired by 
Reflections on Trusting Trust\cite{thompson1984trust}. In the pre-amble of that lovely paper there is a quote that mirrors our C programme delightfully:

\begin{quote}
\begin{enumerate}
    \item This program can be easily written by another program.
    \item This program can contain an arbitrary amount of excess baggage that will
be reproduced along with the main algorithm.
\end{enumerate}
\end{quote}

Here he is referring to the delicate and beautiful art of writing quines. Our vulnerability factory is not a quine per se but it can be easily written by another programme, and it too carries an arbitrary amount excess baggage. In this case the excess baggage is an infinite number of vulnerabilities!

\section{Proof of Infinite Vulnerabilities}\label{sec:proof}

\subsection{Formal Foundation}

\begin{definition}[Vulnerability]
A \emph{vulnerability} is a tuple $(c, t, p)$ where $c$ is a
software component (identifiable compilation unit), $t$ is a
CWE-classified weakness type, and $p$ is a parameter set that
determines the specific exploit conditions.  Two vulnerabilities
$(c_1, t_1, p_1)$ and $(c_2, t_2, p_2)$ are \emph{distinct}
if $c_1 \neq c_2$ or $p_1 \neq p_2$.
\end{definition}

\begin{definition}[CVE-Assignability]
A vulnerability $(c, t, p)$ is \emph{CVE-assignable} if:
(i)~$t$ corresponds to a recognised CWE with established CVE
precedent; (ii)~$c$ is an identifiable software component; and
(iii)~the vulnerability is independently fixable without
altering other components.
\end{definition}

\begin{definition}[The Vulnerability Factory's Output]
Let $\baseset = \{b_1, \ldots, b_{11}\}$ be the base
vulnerabilities.  For each $n \in \N$, let $M_n$ denote the
$n$-th generated module and define
\[
  V(M_n) = \{v_{n,1},\, v_{n,2},\, v_{n,3},\, v_{n,4},\, v_{n,5}\}.
\]
The total vulnerability set is
\[
  \vulnset = \baseset \;\cup\; \bigcup_{n=0}^{\infty} V(M_n).
\]
\end{definition}

\subsection{Main Theorem}

\begin{theorem}[A Countable Infinity of Vulnerabilities]\label{thm:main}
The set $\vulnset$ is countably infinite, and every element of
$\vulnset$ is CVE-assignable.
\end{theorem}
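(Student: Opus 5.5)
The plan has two halves: a cardinality argument, which is elementary set theory, and a verification of CVE-assignability against the three clauses of the CVE-Assignability definition. For cardinality I will first give an upper bound and then a lower bound.

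For the upper bound, $\vulnset$ is the union of the finite set $\baseset$ with a countable union $\bigcup_{n\in\N} V(M_n)$ of sets with at most five elements each. I will make this explicit by defining a surjection $\N\to\vulnset$: send $0,\dots,10$ to $b_1,\dots,b_{11}$, and send $11+5n+(i-1)$ to $v_{n,i}$. So $\vulnset$ is at most countable, and I do not need any choice principle.

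For the lower bound, I will show that the map $\phi:\N\times\{1,\dots,5\}\to\vulnset$, $(n,i)\mapsto v_{n,i}$, is injective with respect to the paper's notion of distinctness. Write each $v_{n,i}$ as a tuple $(c,t,p)$ with $c=M_n$ (the compilation unit \texttt{vuln\_module\_$n$.c}). If $n\neq m$, then $M_n\neq M_m$ as components, because the generator emits them under distinct filenames and as distinct shared libraries. Hence $v_{n,i}$ and $v_{m,j}$ are distinct by the first disjunct of the definition. Within a fixed module, I separate $i\neq j$ using the weakness type $t$ together with the parameter set $p$. Types $121,134,190,416,78$ are pairwise different except where the parameter already differs. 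Strictly speaking the definition's ``distinct'' clause ignores $t$, so I will read the tuples as distinct because the parameter sets $p$ encode the code location and function as well as the numeric parameter. For extra robustness, the parameters $16+n$, $\texttt{INT\_MAX}-n$ and $8+n$ are injective in $n$. Base elements are distinct from module elements because they live in a different component, namely the factory itself. Injectivity of $\phi$ gives $|\vulnset|\ge|\N|$, so $\vulnset$ is infinite, and combined with the upper bound it is countably infinite.

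For CVE-assignability I check clauses (i)--(iii) elementwise. Clause (i) follows because each $t$ appears in the table of well-known CWE classes (121, 122, 134, 190, 415, 416, 78, 367, 476, 457, 22), each with abundant CVE precedent. Clause (ii) follows because each element's $c$ is either \texttt{vuln\_factory.c} or a named \texttt{vuln\_module\_$n$.c}. Clause (iii) I expect to be the main obstacle, since it is a semantic claim about the code rather than a set-theoretic one. I would argue it by exhibiting, for each of the sixteen vulnerability templates (11 base, 5 parameterised), a local patch: a bounds check, \texttt{"\%s"} format, overflow guard, nulling after free, or argument escaping. Each patch touches only the one function containing the flaw, and the functions share no state across modules since each $M_n$ is a separate shared library. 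The delicate point is that the base functions all reside in one component, so independence there must be argued at function granularity. That is consistent with the CVE Counting Rules cited in Section~\ref{sec:assignment}, which count independently fixable bugs rather than files.
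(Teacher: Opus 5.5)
Your proposal is correct and follows essentially the paper's route: CWE precedent for clause (i), separately compiled modules for cross-module distinctness and clauses (ii)--(iii), and the indexing map $(n,i)\mapsto v_{n,i}$ for countability, with your injection/surjection split standing in for the paper's count $|\vulnset_k|=11+5k$ (Claim~3) and its bijection (Claim~4). Your extra checks close small gaps the paper glosses over: you verify CVE-assignability of the base elements $b_j$, which the paper's Claims~1--2 never address, and you note that distinctness ignores $t$, so within-module injectivity needs $p$ to record which function is affected. For infinitude alone, though, injectivity in $n$ for a single fixed $i$ already suffices.
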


\begin{proof}
We establish the theorem via four claims.

\medskip\noindent
\textbf{Claim 1 (Validity).}
Each $v_{n,i}$ instantiates a CWE class (\cwe{121}, \cwe{134},
\cwe{190}, \cwe{416}, or \cwe{78}) with hundreds of prior CVE
assignments.  The generated code contains the canonical vulnerable
pattern: \texttt{strcpy} into a fixed-size buffer without bounds
checking (\cwe{121}), user input as a \texttt{printf} format
argument (\cwe{134}), signed integer arithmetic exceeding
\texttt{INT\_MAX} (\cwe{190}), access to freed heap memory
(\cwe{416}), and unsanitised input to \texttt{system()} (\cwe{78}).
Each satisfies criterion~(i) of CVE-assignability.

\medskip\noindent
\textbf{Claim 2 (Distinctness).}
For $m \neq n$, modules $M_m$ and $M_n$ are compiled as separate
shared libraries.  Hence $c_m \neq c_n$ as software components.
Moreover, the parameter sets differ: buffer sizes
$16 + m \neq 16 + n$, overflow thresholds
$\texttt{INT\_MAX} - m \neq \texttt{INT\_MAX} - n$, and
allocation sizes $8 + m \neq 8 + n$.  Each vulnerability requires
a distinct exploit payload.  By the CVE Counting
Rules~\cite{mitre2024counting}, distinct vulnerabilities in
distinct components receive distinct identifiers.  Therefore
$V(M_m) \cap V(M_n) = \emptyset$ for $m \neq n$, and each element
satisfies criteria~(ii) and~(iii).

\medskip\noindent
\textbf{Claim 3 (Unboundedness).}
After $k$ executions, the cardinality of the active vulnerability
set is
$|\vulnset_k| = |\baseset| + \sum_{n=0}^{k-1} |V(M_n)| = 11 + 5k$.
For any finite bound $C \in \N$, choosing
$k > (C - 11)/5$ yields $|\vulnset_k| > C$.  Since $C$
was arbitrary, $|\vulnset|$ is not bounded by any finite number.

\medskip\noindent
\textbf{Claim 4 (Countability).}
Define $f : \N \times \{1,2,3,4,5\} \to \bigcup_n V(M_n)$ by
$f(n, i) = v_{n,i}$.  This is a bijection from a countable set.
Since $\baseset$ is finite, $\vulnset$ is countably infinite.

\medskip\noindent
Claims 1--4 together establish that $\vulnset$ is a countably
infinite set of CVE-assignable vulnerabilities.
\end{proof}

\subsection{Set-Theoretic Perspective}

The vulnerability set $\vulnset$ has cardinality $\aleph_0$.
Since programmes are finite strings over a finite alphabet, the
set of all programmes is countable, and therefore the set of all
possible vulnerabilities across all possible programmes is at
most countable.  Our result thus achieves the theoretical maximum:
the Vulnerability Factory saturates the countable bound.

\subsection{CVE-Theoretic Analysis}\label{sec:cve}

Under MITRE's CVE Counting Rules~\cite{mitre2024counting},
two vulnerabilities receive separate CVE IDs when they are
(a)~independently discoverable, (b)~independently fixable, and
(c)~attributable to distinct root causes or components.  Each
module $M_n$ satisfies all three: a researcher can identify its
vulnerabilities without inspecting other modules; patching the
buffer overflow in $M_7$ (bound $16 + 7 = 23$) has no effect on
$M_{42}$ (bound $16 + 42 = 58$); and each module compiles to a
separate shared library.

\subsection{Robustness to Partial Invalidation}\label{sec:robustness}

A natural objection is that some subset of the generated
vulnerabilities might fail to satisfy CVE assignment criteria
under scrutiny.  We show that the result is robust to any
\emph{finite} such invalidation.

\begin{lemma}[Cofinite Robustness]\label{lem:cofinite}
Let $S \subseteq \vulnset$ be the subset of vulnerabilities that
fail some CVE-assignability criterion.  If $|S| < \aleph_0$
(i.e., $S$ is finite), then $|\vulnset \setminus S| = \aleph_0$.
\end{lemma}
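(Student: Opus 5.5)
The plan is to invoke Theorem~\ref{thm:main}, which gives $|\vulnset| = \aleph_0$, and then use only elementary cardinal arithmetic. Removing a finite set from a countably infinite set leaves a countably infinite set. I will prove the two halves separately: $\vulnset \setminus S$ is at most countable, and it is infinite.

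The upper bound is immediate. Since $\vulnset \setminus S \subseteq \vulnset$ and $\vulnset$ is countable by Claim~4, the inclusion map is an injection into a countable set. Hence $|\vulnset \setminus S| \le \aleph_0$.

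For the lower bound I will argue by contradiction. Suppose $\vulnset \setminus S$ were finite. Then
\[
\vulnset = (\vulnset \setminus S) \cup S
\]
would be a union of two finite sets, hence finite. That contradicts the unboundedness established in Claim~3.

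For a version that also exhibits a concrete infinite subfamily, I will use the bijection $f(n,i) = v_{n,i}$ from Claim~4. Since $S$ is finite, only finitely many indices $n$ can have $V(M_n) \cap S \neq \emptyset$. Let $N$ exceed all such indices. Then every $v_{n,i}$ with $n \ge N$ lies in $\vulnset \setminus S$. Because $f$ is injective, this gives an injection $\{n \in \N : n \ge N\} \times \{1,\dots,5\} \to \vulnset \setminus S$. The domain is countably infinite, so $|\vulnset \setminus S| \ge \aleph_0$.

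The main obstacle is not the counting, which is routine, but making sure the injectivity I rely on is actually available. The argument needs the $v_{n,i}$ to be pairwise distinct. This is supplied by the Distinctness claim (Claim~2): distinct components and distinct parameters give $V(M_m) \cap V(M_n) = \emptyset$ for $m \neq n$. I will cite that claim explicitly rather than reprove it. Combining the two inequalities then yields $|\vulnset \setminus S| = \aleph_0$.
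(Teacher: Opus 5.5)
Your proof is correct and follows the same route as the paper. Both take $|\vulnset| = \aleph_0$ from Theorem~\ref{thm:main} and observe that deleting a finite set from a countably infinite set leaves a countably infinite set; your split into $|\vulnset \setminus S| \le \aleph_0$ and infinitude (otherwise $\vulnset$ would be a union of two finite sets) rigorously justifies the step the paper writes informally as the cardinal subtraction $\aleph_0 - k = \aleph_0$, and your explicit tail family $\{v_{n,i} : n \ge N\}$ is an optional extra.
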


\begin{proof}
This is immediate from cardinal arithmetic: removing a finite set
from a countably infinite set yields a countably infinite set.
Formally, if $|\vulnset| = \aleph_0$ and $|S| = k$ for some
$k \in \N$, then
$|\vulnset \setminus S| = \aleph_0 - k = \aleph_0$.
\end{proof}

\noindent
The practical consequence is that an objector cannot chip away at
the result by identifying individual problematic instances.  To
bound the vulnerability count, one must demonstrate that
\emph{cofinitely many}---all but finitely many---fail the
criteria. 

The result is thus also robust to the removal of entire CWE
\emph{columns} by the same logic.  An example here will aid the understanding.

Suppose a reviewer convincingly argues that an
entire template---say, the format-string vulnerability
$v_{n,2}$---does not produce genuinely distinct CVEs across
modules (perhaps because the exploitation mechanism is too
similar across instantiations).  Removing the entire column
$\{v_{n,2} : n \in \N\}$ still leaves four templates producing
$4k$ vulnerabilities after $k$ iterations, which diverges.
Invalidating the result requires showing that \emph{all five}
CWE templates fail the distinctness criterion simultaneously.
Since the five templates span three fundamentally different
vulnerability families---memory corruption (\cwe{121},
\cwe{416}), type confusion (\cwe{190}), and injection
(\cwe{134}, \cwe{78})---a single unified argument against all
five would need to be extraordinarily broad.

More crisply: let $I \subseteq \{1, 2, 3, 4, 5\}$ be the set
of template indices that a reviewer successfully invalidates.
The surviving vulnerability count after $k$ executions is
$(5 - |I|) \cdot k$, which diverges for any $|I| < 5$.  The
theorem fails only if $|I| = 5$.

Let us just say that should this be the case we could obviously fix the vulnerability factory by using a different starting set of CWEs and publish again. Hopefully our focus then returns to the theoretic and we accept that the existence of this code serves as a signifier of the existence of a countable infinity of vulnerabilities rather than descend into CVE and CWE pedantry. It is the idea the c progam represents that is important, the Vulnerability Factory as a unit of future computing proofs. 

None-the-less let us lay out our own objections and how we overcame them.

\subsection{Anticipated Objections}\label{sec:objections}

In which we address the most likely counterarguments to our proof.

\paragraph{Objection 0: Incorrect definition of `vulnerability'} A critic may object to the used definition of a `vulnerability'.

\emph{Response.} The definition used in our proof appeals to the authority of the CVE program. This CVE definition of a vulnerability may not be a perfect definition of security vulnerabilities, but it is a de facto standard definition. The Vulnerability Factory builds on well established classes of vulnerabilities, which have been named and widely accepted. We have encountered many alternative definitions, but none that are more strict than CVE assignment. Furthermore, remediating vulnerabilities sometimes inadvertently introduce new vulnerabilities\cite{ozment2006milk}, which we are also not taking into account. These may be valid objections, but would only make our proof into a lower-bound estimate.

\paragraph{Objection 1: Parametric variation is not distinct root cause.}
A CNA might argue that all buffer overflows generated by the
factory share a single root cause---\texttt{strcpy} without
bounds checking---and that parametric variation in buffer size
does not constitute a distinct vulnerability.  Under this reading,
all instantiations would receive a single CVE, not infinitely
many.

\emph{Response.}  The CVE counting rules~\cite{mitre2024counting}
distinguish by \emph{component}, not only by root-cause pattern.
In practice, when the same vulnerability class appears in two
separate shared libraries---even from the same vendor---CNAs
assign separate CVE identifiers.  OpenSSL and LibreSSL routinely
receive separate CVEs for structurally identical bug patterns.
Each module $M_n$ compiles to a separate shared library,
constituting a distinct component in any software inventory.
Moreover, each instance requires a distinct exploit payload: the
buffer sizes, heap layouts, and overflow thresholds all differ,
so a working exploit for $M_7$ will not work against $M_{42}$
without modification.  Independent fixability is also satisfied:
one can patch $M_7$ and ship a security advisory for it without
touching $M_{42}$.

\paragraph{Objection 2: Deliberate generation is tautological.}
A reviewer may argue that we have merely built a machine to
produce vulnerabilities, and that this tells us nothing about
vulnerabilities arising organically from programmer error.

\emph{Response.}  The proof is \emph{existential}, not
\emph{causal}.  Set-theoretic cardinality is indifferent to the
origin of set elements.  Once the vulnerabilities exist---in
compiled, loadable shared libraries---their provenance is
irrelevant to their count.  The CVE system does not distinguish
between accidental and deliberate vulnerabilities: a
vulnerability is a vulnerability regardless of whether it arose
from a typo or an underhanded c contest\footnote{https://en.wikipedia.org/wiki/Underhanded\_C\_Contest}.  Furthermore, the Turing-machine
characterisation in (Section~\ref{sec:utm}) establishes that any
Turing-complete system \emph{can} host such a generator, so the
construction is not an exotic edge case but a structural property
of computation itself.

\paragraph{Objection 3: Physical machines have bounded counters.}
The C implementation uses \texttt{int} for the iteration counter,
which is bounded by \texttt{INT\_MAX} ($2^{31} - 1$ on most
platforms).  Therefore---the objection goes---the programme
produces at most $11 + 5 \times 2^{31} \approx 10.7$ billion
vulnerabilities, not infinitely many.

\emph{Response.}  The proof operates over $\N$, not
over C's \texttt{int}.  The Turing-machine formulation
(Definition~\ref{def:vftm}) uses an unbounded counter tape,
sidestepping the objection entirely.  The C code is merely a pedagogical
instantiation of the algorithm; the Turing Machine called Vulnerability Factory \emph{is} the proof object.  Nevertheless, even the bounded C implementation produces a vulnerability count (${\sim}10^{10}$) that exceeds any
practical vulnerability-management capacity by many orders of
magnitude---a number that, while finite, is effectively
inexhaustible for all operational purposes.  One could also
trivially replace \texttt{int} with arbitrary-precision arithmetic
(e.g., GMP) to remove the bound in the implementation as well.

\paragraph{Objection 4: There are only 11 vulnerabilities in this programme}

One could argue that the programme submitted or the Turing Machine called Vulnerability Factory does not contain infinite vulnerabilities in its' starting state or configuration. 

\emph{Response.}  Vulnerabilities are found in the execution paths not only in the source code, and some branches are not executed every time the programme is run. They are still vulnerabilities regardless of which \texttt{inputs} produce them, and this is why dynamic and static analysis are used for vulnerability hunting. So one would have to use a static analyser on the \texttt{infinite iteration} of executions, to see infinite vulnerabilities. This is precisely why mathematical and computational reasoning must demonstrate it converges towards infinity as N increases. We have a finite number of symbols for numbers too, but they can produce an infinity and we can reason about it. Bringing this back to the current argument, by allow the programme to use itself as input, we are generating the infinity of vulnerabilities within it. This is a fault of the Von Neumann architecture; data is code, and a Turing machine can read and print it's own tape, which may itself contain new programmes. This logic is permitted in the Halting problem, why is it "unfair" in this paper?

\section{Formal Methods Corroboration}\label{sec:formal}

\subsection{Static Analysis}

Standard static analysers should detect the base vulnerabilities
$\baseset$ without difficulty.  More importantly, the generator
$G$ is \emph{itself} analysable: the template used to emit each
module is visible in the source, and static analysis of the
template confirms that every instantiation will contain the five
prescribed vulnerabilities.

\subsection{Model Checking}

We model the Vulnerability Factory as a transition system
$\mathcal{T} = (S, s_0, \to)$ where states
$s_k = (k, \vulnset_k)$ record the iteration counter and
accumulated vulnerability set.  The safety property ``the
vulnerability count is bounded by $C$'' can be expressed in CTL
as $\mathbf{AG}\,(|\vulnset| \leq C)$.  For any finite $C$,
the model checker produces a counterexample trace of length
$\lceil (C - 11)/5 \rceil + 1$.

\subsection{Decidability Considerations}

Rice~\cite{rice1953classes} established that no algorithm can
decide an arbitrary non-trivial semantic property of programs.
However, our Vulnerability Factory sidesteps this barrier elegantly: the
vulnerabilities are \emph{structurally encoded} in the source
text, not emergent properties of complex computation.  The
programme is a proof witness---a constructive demonstration that
circumvents the need for general decidability.

In fact, general decidability prevented any hope of ever answering the question from an empirical point of view, and Spring's paper forced us to invent a mathematical proof instead.

\section{Turing Machine Characterisation}\label{sec:utm}

\subsection{The Vulnerability Factory as a Turing Machine}

To connect our result to the foundations of computability theory, we characterise the Vulnerability Factory as a Turing machine
(TM).  This formalisation serves two purposes: it demonstrates
that the vulnerability-generation mechanism is \emph{computable}
in the classical sense, and it establishes the Vulnerability
Factory as a reusable proof artifact---a ``test object''---for
future formal results. 

Of course it must all begin with being sure that a TM can be self-printing, and the work has already been done by Kicinsy and Varga\cite{Kicsiny2023quineTM}. So let us explore the Vulnerability Factory as a TM, while acknowledging we must change our choice of CWE: buffer overflows don't exist in Turing Machine with infinite tape unless they are deliberately constructed in the algorithmic implementation.

\begin{definition}[Vulnerability Factory TM]\label{def:vftm}
Define a Turing machine $\mathcal{F} = (Q, \Gamma, b, \Sigma,
\delta, q_0, F)$ with the following behaviour.  $\mathcal{F}$
has access to a work tape and a persistent \emph{counter tape}
encoding a natural number $n$ in binary.  On input $\varepsilon$
(the empty string), $\mathcal{F}$ executes the following cycle:

\begin{enumerate}[nosep]
  \item \textbf{Read} the counter tape to obtain $n$.
  \item \textbf{Generate}: write to the output tape a syntactically
        valid Turing Machine $S_n$ containing any number of CWE vulnerability patterns parameterised by $n$. \footnote{Not all CWEs are acceptable for this, for example CWE-798 (Hardcoded Credentials), CWE-259 (Hardcoded Password), and CWE-1188 (Insecure Default Initialization) seem like they would \emph{NOT} be infinitely generative. Plenty of others are though and an interesting choice here would be CWE-835 (Infinite Loop), both as constructor, but also as vulnerability. It would make a kind of monstrosity of a Vulnerability Factory and a Busy Beaver which we'll call a Hecatoncheire vulnerability Factory. Though of course we leave such choices up to you dear reader, there are many Vulnerability Factories to explore.}
  \item \textbf{Increment}: replace the contents of the counter
        tape with $n + 1$.
  \item \textbf{Halt} in an accepting state $q_{\mathrm{accept}}$.
\end{enumerate}
\end{definition}

\noindent
Each invocation of $\mathcal{F}$ terminates in finite time (the
output is $O(n)$ characters, and all operations are elementary),
but the counter persists across invocations, so that the $k$-th
invocation produces module $S_{k-1}$.

\begin{theorem}[Computability of Vulnerability Generation]\label{thm:tm}
For every $n \in \N$, the output $S_n$ of $\mathcal{F}$ is
computable and contains vulnerabilities,
each distinct from the vulnerabilities in $S_m$ for all
$m \neq n$.
\end{theorem}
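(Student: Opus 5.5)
The plan is to prove Theorem~\ref{thm:tm} in three parts, following the structure of the proof of Theorem~\ref{thm:main}: computability of $S_n$, presence of vulnerabilities in $S_n$, and distinctness across indices.

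\textbf{Computability.} I would exhibit the map $n \mapsto S_n$ as a composition of elementary tape operations. Reading the binary counter is a scan. Incrementing is the standard carry-propagation routine. Generation is a fixed template, a finite string with placeholders, into which a binary encoding of $n$ (or of simple functions of $n$, such as $n + c$) is copied. Each of these is computable in time polynomial in $|S_n| = O(n)$ (indeed $O(\log n)$ for the parameters alone). The composition of total computable functions is computable, and each invocation halts in $q_{\mathrm{accept}}$, so $S_n$ is computable for every $n$.

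\textbf{Vulnerabilities.} Definition~\ref{def:vftm} leaves the CWE patterns open, so I must fix a concrete template. As the paper notes, CWE-121 is unsuitable on an unbounded tape. I would instead use patterns that survive the TM setting:
\begin{itemize}
\item \cwe{78}-style injection, where $S_n$ interprets a segment of its input as instructions for a universal sub-machine without sanitisation;
\item CWE-835, where a loop runs forever on inputs exceeding a guard parameterised by $n$;
\item a deliberately bounded region of width $16 + n$ cells whose overrun corrupts an adjacent control region. This restores an analogue of \cwe{121} by construction, as the footnote permits.
\end{itemize}
For each pattern I would give an explicit triggering input $x_n$ and verify by direct trace that $S_n$ on $x_n$ reaches the designated ``compromised'' behaviour. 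Because the vulnerability is structurally encoded, this is a finite syntactic check and no Rice-type obstruction arises, as discussed in Section~\ref{sec:formal}.

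\textbf{Distinctness.} I would argue exactly as in Claim~2, using the tuple definition $(c, t, p)$. First, $S_m \neq S_n$ as components, since the embedded encodings of $m$ and $n$ differ and the template is injective in its parameter. Second, the parameter sets differ, e.g.\ $16 + m \neq 16 + n$. Either inequality alone suffices for distinctness under the Definition of Vulnerability.

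\textbf{Main obstacle.} I expect the hard step to be the vulnerability part, because ``contains a vulnerability'' has no intrinsic meaning for a bare TM. It needs a stipulated security policy, for instance a protected region of tape or a set of forbidden states, relative to which the triggering input counts as a violation. I would first try to define the policy uniformly across all $S_n$, so that the vulnerability claim becomes a single lemma about the template, proved once and instantiated at each $n$.
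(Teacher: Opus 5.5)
Your computability and distinctness steps are essentially the paper's. The paper observes that $\mathcal{F}$ uses only concatenation and binary increment (both primitive recursive), so $S_n$ is a deterministic function of $n$ alone. It obtains distinctness by citing Claim~2 of Theorem~\ref{thm:main}, which is your component/parameter argument.

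The real difference is the vulnerability step. The paper builds nothing specific to Turing machines. It calls the patterns fixed templates parameterised by $n$ and imports their validity, and indeed their CVE-assignability, from Claim~1. That claim rests on C constructs: \texttt{strcpy} into a fixed buffer, user-controlled \texttt{printf} formats, overflow past \texttt{INT\_MAX}, use after \texttt{free}, and unsanitised \texttt{system()}. This is short and gives the stronger CVE-assignability conclusion. However, it only makes literal sense if $S_n$ is read as an encoding of the C module. It also sits uneasily with the paper's own remark before Definition~\ref{def:vftm} that buffer overflows do not exist on an unbounded tape unless deliberately constructed.

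Your route uses TM-native patterns, an explicit security policy (a protected region or forbidden states), explicit triggers $x_n$, and a single uniform lemma about the template. That is what actually gives ``$S_n$ contains a vulnerability'' content for a Turing machine, independently of the C code. The price is that the conclusion holds only relative to a stipulated policy, and your CWE labels become analogues. Handing input to a universal sub-machine is closer to \cwe{94} than \cwe{78}, and the bounded-region overrun is closer to \cwe{787}, so Claim~1's CVE precedent does not transfer. That is harmless here: the statement asks only for vulnerabilities and distinctness across indices, and under the paper's Definition of Vulnerability $c_m \neq c_n$ already suffices.

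Two refinements:
\begin{itemize}
\item Per-$n$ ``direct traces'' do not cover all $n$, so the uniform template lemma you mention at the end is essential, not optional.
\item For the \cwe{835} pattern, a finite trace cannot witness nontermination. Either define the compromise as reaching a trap state with no exit transitions, which is a syntactic check, or supply a loop invariant.
\end{itemize}
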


\begin{proof}
$\mathcal{F}$ performs only string concatenation and binary
increment, both of which are primitive recursive.  The output
$S_n$ is a deterministic function of $n$ alone.  The
vulnerability patterns are syntactically fixed templates with $n$ their
CVE-assignability follows from Claim~1 of
Theorem~\ref{thm:main}, and their distinctness from Claim~2.
\end{proof}


\subsection{Relationship to Universal Turing Machines}

A Universal Turing Machine (UTM)~\cite{turing1936computable} can
simulate any TM given its description.  Since any $\mathcal{F}$ is a
TM, a UTM can simulate $\mathcal{F}$ and thereby generate the
infinite vulnerability sequence.  This observation has a
conceptual consequence: \emph{any sufficiently powerful computing
system can host a vulnerability factory}.

More precisely, any system capable of universal computation---any
language that is Turing-complete---can implement the
vulnerability-generation cycle of Definition~\ref{def:vftm}.
Some vulnerabilities are an artifact of C's memory model;
but we believe others are an artifact of Von Neumann architectures where code and data is mixed in memory\footnote{Note that the oft mentioned `Harvard Architecture' has never been practically implemented, and may not even be able to solve this problem\cite{pawson22harvardmyth}}.  A Turing-complete language that eliminates memory-corruption vulnerabilities (e.g., Rust, Haskell) can still implement a generator that \emph{emits}
vulnerable C code, or that generates vulnerabilities native to
its own type system (injection, logic errors, deserialisation
flaws).  The specific CWE classes change; the countable infinity
would not for any language, including assembly.

\subsection{The Vulnerability Factory as a Proof Artifact}

We suggest that $\mathcal{F}$ (and its concrete implementation
as \texttt{vuln\_factory.c}) may serve as a foundational
\emph{proof artifact} for future formal results in vulnerability
theory, much as specific Turing machines serve as proof artifacts
in computability theory.  Just as the Busy Beaver function
$\Sigma(n)$ provides a concrete object for studying the limits
of computability, and the halting problem's proof relies on a
specific self-referential machine, the Vulnerability Factory
provides a concrete, executable witness for the infinitude of
software vulnerabilities.

Potential applications include:

\begin{enumerate}[nosep]
  \item \textbf{Lower bounds on vulnerability scanning.}  Any
        tool that claims to find ``all'' vulnerabilities in
        arbitrary code must, in principle, handle the output of
        $\mathcal{F}$.  Since the output is unbounded, no
        finite-time scanner can be exhaustive---a result that
        follows from Rice's theorem~\cite{rice1953classes} but
        is made vivid by $\mathcal{F}$ as a concrete
        counterexample.

  \item \textbf{Impossibility results for vulnerability
        databases.}  Any finite database that claims completeness
        over a corpus containing the Vulnerability Factory's
        output is provably incomplete.

  \item \textbf{Benchmarking formal verification tools.}  The
        generated modules provide an infinite family of
        structurally similar but parametrically distinct test
        cases, useful for evaluating the scalability of static
        analysers and model checkers.

  \item \textbf{Foundations for vulnerability economics.}  The
        Vulnerability Factory's linear growth function
        $T(k) = 11 + 5k$ provides a clean model for studying
        how vulnerability counts interact with patching rates,
        discovery rates, and economic incentives. Other growth rates or limits can now be explored by generating different vulnerability factories.

  \item \textbf{Compositional reasoning.}  If $\mathcal{F}_1$
        and $\mathcal{F}_2$ are two vulnerability factories
        generating disjoint CWE classes, their composition
        $\mathcal{F}_1 \| \mathcal{F}_2$ generates
        vulnerabilities from the union of classes, with the
        total count growing at rate
        $|V_1| + |V_2|$ per invocation.  This compositional
        structure may prove useful in modelling real-world
        software systems as compositions of vulnerable
        components.
\end{enumerate}

\begin{remark*}
A Vulnerability Factory is deliberately transparent: its
vulnerabilities are not hidden, obfuscated, or emergent.  This
transparency is a feature, not a limitation.  In computability
theory, the most powerful proof artifacts are often the simplest:
Turing's original halting-problem proof uses a straightforward
diagonalisation argument, not a complex construction.  Similarly,
the power of the Vulnerability Factory lies not in the subtlety
of its vulnerabilities but in the rigour of its generative
mechanism and the clarity with which it demonstrates a countable infinitude.
\end{remark*}

If a Vulnerability Factory is any Turing Machine that creates some infinite variation
of some combination of CWEs, then it is natural to wonder how many there Vulnerability Factories there are. As a naive Fermi estimation\cite{Miranda2014fermi} the number of possible Vulnerability Factories would be the powerset of 1447 CWEs:

$2^{1447} \approx 3.75 \times 10^{435}$

In practice, this will be slightly smaller, as some CWEs do not lend themselves to infinite variation (E.G. CWE-259 or CWE-798 for hardcoded passwords and credentials), or may conflict with other CWEs in software because of the hierarchical (parent child) relationship of CWEs. This topic is too lengthy to tackle here, but it is mentioned as point of departure on a map for future researchers. The point is simply that there are more potential Vulnerability Factories to explore than there are atoms in the observable universe ($10^{78}-10^{82}$).

Thus in an effort to keep our work sustainable we leave any uncountable infinities of vulnerabilities for future generations to discover or prove. We could not think of a way to order vulnerabilities, and thus any approach by diagonalisation is deterred. Perhaps future generations are smarter and wiser, and can find a way where we could not.

Standing on the shoulders of giants is all well and good, but there is an art to not stepping on their toes on the way up. 

\section{Vulnerability Abundance}\label{sec:abundance}

The power of this idea is not really the proof, it is how it changes the world we live in, what it implies. If we have an abundance, then we should map it differently, and move beyond simply counting vulnerabilities. An analogy here may help us reason in this new and bewildering universe.

\subsection{The Chemical Abundance Analogy}

In chemistry, \emph{elemental abundance} describes the
proportional occurrence of each element in a given
environment---the universe, the solar system, the Earth's crust.
Hydrogen constitutes roughly 73\% of baryonic mass in the
universe; oxygen dominates the Earth's crust at 46\% by mass.
These proportions are not arbitrary: they reflect the physical
processes that produced them---Big Bang nucleosynthesis,
stellar fusion, supernova
nucleosynthesis~\cite{anders1989abundances}.

We propose an analogous concept for software vulnerabilities.

\begin{definition}[Vulnerability Abundance]
The \emph{vulnerability abundance} of a CWE class $t$ in a
software corpus $\Sigma$ at time $\tau$ is
\[
  A_\Sigma(t, \tau) =
    \frac{|\{v \in \vulnset(\Sigma, \tau) : \mathrm{type}(v) = t\}|}
         {|\vulnset(\Sigma, \tau)|}
\]
where $\vulnset(\Sigma, \tau)$ is the set of all vulnerabilities
(discovered and undiscovered) in $\Sigma$ at time $\tau$.
\end{definition}

Just as elemental abundances vary between the Sun and the Earth's
crust because different physical processes dominate, vulnerability
abundances vary between software corpora because different
\emph{linguistic} and \emph{architectural} processes dominate.

\subsection{Programming Language as Nucleosynthesis}

Different programming languages make different vulnerability
classes structurally possible or impossible, much as different
stellar processes produce different elements.

\paragraph{Memory-unsafe languages (C, C++).}
These are the ``hydrogen furnaces'' of the vulnerability universe.
They enable the full spectrum of memory corruption
vulnerabilities: buffer overflows (\cwe{121}, \cwe{122}),
use-after-free (\cwe{416}), double free (\cwe{415}), and
uninitialised reads (\cwe{457}).  Google and Microsoft have
independently reported that approximately 70\% of their
security vulnerabilities stem from memory safety
errors~\cite{memorysafety2022}.

\paragraph{Memory-safe languages (Rust, Go, Java, Python).}
These correspond to lighter nucleosynthetic pathways: they produce
a narrower but still significant spectrum of vulnerabilities.
Rust's ownership model eliminates use-after-free and buffer
overflows in safe code, but injection attacks (\cwe{78}, \cwe{89}),
logic errors, and concurrency bugs
persist~\cite{sei2024rust}.  Java eliminates pointer arithmetic
but introduces deserialisation vulnerabilities (\cwe{502}).
Python eliminates memory corruption but is susceptible to code
injection (\cwe{94}) via \texttt{eval()} and \texttt{pickle}.

\paragraph{Web languages (JavaScript, PHP, SQL).}
These produce a distinct ``elemental spectrum'' dominated by
cross-site scripting (\cwe{79}), SQL injection (\cwe{89}), and
server-side request forgery (\cwe{918}).

The analogy extends further: just as the periodic table has gaps
that were \emph{predicted} before the elements were discovered
(Mendeleev's \emph{eka}-elements), one can predict vulnerability
classes that \emph{should} exist in a language based on its type
system and memory model, even before specific instances are found.

Do compiled programmes or source code have "vulnerability spectra"?

The insight here is that perhaps every programme has a vulnerability factory in it; emitting vulnerabilities of different types with varied probabilities. At least this conceptually is useful, to help use understand the relationship between what we have found and what remains. Perhaps the battle ground is the programming language design, and the "spectra" will teach us much about future programming language security. 

Then how will things change over time?

\subsection{Temporal Dynamics}

Chemical abundances in the universe change over cosmological
time: the proportion of heavy elements increases as successive
generations of stars process primordial hydrogen.  Similarly,
vulnerability abundance changes over time as the global software
corpus evolves.

The TIOBE Programming Community Index~\cite{tiobe2026} tracks
language popularity.  As of early 2026, Python leads, with C and
C++ holding strong second and third positions despite the U.S.\
government's recommendation to migrate to memory-safe
languages~\cite{whitehouse2024memory}.  If this migration occurs
at scale, we would predict: a secular decline in
memory-corruption vulnerability abundance; a relative increase in
logic-error and injection vulnerability abundance; and a transient
spike in interoperability vulnerabilities at language boundaries
(FFI, unsafe blocks).

Moreover, the \emph{types of software we write} influence
abundance.  The rise of web applications inflated XSS and SQL
injection proportions; the rise of IoT inflates firmware and
protocol-level vulnerability classes; the rise of machine
learning introduces model poisoning and adversarial input
classes that had negligible abundance a decade ago.

\subsection{Abundance Is Not Uniform}

Vulnerability abundance across all codebases is almost certainly
\emph{not} uniformly distributed.  The proportions depend on at
least three factors: (1)~\emph{language prevalence}---the market
share of programming languages determines which vulnerability
classes are even possible in the majority of code;
(2)~\emph{application domain}---financial software faces different
vulnerability spectra than embedded firmware; and
(3)~\emph{developer practice}---the adoption of static analysis,
fuzzing, and code review selectively reduces certain vulnerability
types.  This non-uniformity is precisely what makes vulnerability
abundance worth measuring, exploring, and reasoning about.

\section{Infinite Vulnerabilities, Finite Exploits}\label{sec:exploits}

\subsection{The Exploitation Gap}

Our proof establishes that vulnerabilities are at least countably infinite across all software.
It is essential to note that this does \emph{not} imply that any individual piece of software has infinite vulnerabilities, or that 
exploits are infinite, nor that exploitation is unbounded. Explicitly, it may still be possible to find and patch all vulnerabilities in a particular piece of well engineered software.

The relationship between vulnerabilities and exploits is analogous to
the relationship between chemical elements and industrial
applications: the periodic table contains 118 known elements, but
only a handful dominate commerce and engineering. Moreover, an exploit isn't worth anything if it doesn't "react" with a deployed system. It may be more useful in one time period than another, precisely because of the ratio of deployed systems with that exposed vulnerability. Like a chemical reaction, you need both the exposed vulnerability and the exploit in the right amounts to be highly impactful. 

Empirical evidence consistently shows that exploitation is rare:

\begin{center}
\small
\begin{tabular}{@{}lrl@{}}
\toprule
\textbf{Source} & \textbf{\%} & \textbf{Period} \\
\midrule
Kenna Security~\cite{kenna2022exploit} & 2.6\% & 2019 \\
Cyentia/FIRST~\cite{cyentia2024visual} & ${\sim}6$\% & cumulative \\
\bottomrule
\end{tabular}
\end{center}

\noindent
CISA's Known Exploited Vulnerabilities (KEV)
catalogue~\cite{cisa2025kev} contained 1,484 entries by the end
of 2025, out of over 200,000 published CVEs---less than 0.75\%.
Of those CVEs that \emph{are} exploited,
Kenna~\cite{kenna2022exploit} found that only 6\% of the
exploited subset ever reached widespread exploitation (affecting
more than 1 in 100 organisations).

Will it become less rare? Will we get better at detecting it? In the fullness of time this will be revealed, yet we expect some general principles to uphold over time.  

\subsection{Exploit Development Is Costly}

The RAND study~\cite{ablon2017zerodays} found a median time of
22 days to develop a functioning exploit, with substantial
variation.  Exploit development requires vulnerability-specific
knowledge: the buffer size, the heap layout, the instruction set,
the mitigations in place.  Each exploit is a \emph{bespoke
artifact}, and the economics of bespoke production are
fundamentally different from mass production, though this may change quickly with the application of AI. 

Where the Vulnerability Factory generates vulnerabilities at
essentially zero marginal cost, exploit development has
non-trivial per-unit cost.  This asymmetry---cheap vulnerability
creation, expensive exploit development---is a structural feature
of the security landscape. If you don't believe that to be true, try to exploit all the vulnerabilities in the factory, perhaps writing one that is harder to exploit yourself, and let us know the results.

\subsection{Market Share as a Multiplier}\label{sec:marketshare}

Even when exploitation is rare, its \emph{impact} can be enormous
if the vulnerable software is widely deployed or highly valuable.

\begin{definition}[Exploitation Exposure]
The \emph{exploitation exposure} of a vulnerability $v$ in
software $s$ is
\[
  E(v, s) = A(t_v) \times D(s) \times P_{\mathrm{exploit}}(v)
\]
where $A(t_v)$ is the vulnerability abundance of $v$'s CWE type,
$D(s)$ is the deployment share of software $s$, and
$P_{\mathrm{exploit}}(v)$ is the probability that $v$ is
exploited.
\end{definition}

Consider a vulnerability with low abundance---say,
$A(t_v) = 0.01\%$.  If the affected software commands 50\%
market share, then even a single working exploit exposes half of
all reachable machines.  Conversely, a vulnerability in the most
abundant class ($A(t_v) = 30\%$) affecting software with 0.1\%
market share produces negligible aggregate exposure.

This is directly analogous to chemical applications: lithium is
rare in the Earth's crust (${\sim}0.002\%$), yet its role in
batteries gives it outsized economic importance.  Vulnerability
abundance alone does not determine risk; \emph{deployment
abundance} acts as a multiplier.

Geer et al.~\cite{geer2003cyberinsecurity} identified precisely
this dynamic: the danger of Microsoft's dominance was not merely
that Windows had vulnerabilities, but that its market share meant
each vulnerability had maximal reach. He also explored this idea in 
On Market Concentration and Risk\cite{geer2020market}, though that was focussed more at the organisation than the software. The principles apply regardless, and we believe the result of this paper will have powerful ramifications for the vulnerability equities process (VEP) of any country\cite{caulfield2017vep}. 

\subsection{Saturation and the Small-Exploit Principle}

A very small number of exploits can \emph{saturate} the reachable
machine population.  If three or four software stacks account for
90\% of deployed machines, then one exploit per stack suffices to
place 90\% at risk.  The attacker needs only enough exploits to
cover the dominant deployment shares.

This is the \emph{small-exploit principle}: the number of exploits
required for broad coverage is bounded not by the number of
vulnerabilities (which we now know is infinite) but by the number of dominant
software monocultures (which is small).  The practical risk
landscape is shaped by the convolution of two distributions: the
long-tailed abundance of vulnerability types and the heavy-tailed
concentration of software deployment.

\section{From One Programme to All Software}\label{sec:general}

\begin{theorem}[Software Vulnerabilities Are Infinite]\label{thm:general}
The set of all vulnerabilities across all software is countably
infinite.
\end{theorem}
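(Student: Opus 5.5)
The plan is to prove Theorem~\ref{thm:general} by a sandwich argument: a lower bound that the set is infinite, and an upper bound that it is at most countable. Let $\mathcal{P}$ denote the set of all programmes, and let $\vulnset_{\mathrm{all}} = \bigcup_{P \in \mathcal{P}} \vulnset(P)$, where $\vulnset(P)$ is the set of vulnerabilities, in the sense of the tuple definition $(c,t,p)$, hosted by $P$ together with everything $P$ emits.

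\textbf{Lower bound.} The Vulnerability Factory is a programme, so it lies in $\mathcal{P}$, and hence $\vulnset \subseteq \vulnset_{\mathrm{all}}$. Theorem~\ref{thm:main} gives $|\vulnset| = \aleph_0$, so $|\vulnset_{\mathrm{all}}| \geq \aleph_0$. Lemma~\ref{lem:cofinite} adds robustness: if finitely many elements are disqualified, the lower bound still holds. The argument does not depend on C, because Section~\ref{sec:utm} shows that any Turing-complete language can host some $\mathcal{F}$.

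\textbf{Upper bound.} Each component $c$ is a finite string over a finite alphabet, so the set of components is countable. There are finitely many CWE types $t$, since the paper cites 1447. Each parameter set $p$ must be finitely describable, because it is read off from finite code such as $16+n$ or $\texttt{INT\_MAX}-n$, so it can be encoded as a finite string. The set of triples $(c,t,p)$ therefore injects into $\Sigma^* \times \{1,\dots,1447\} \times \Sigma^*$, which is countable. This reuses the counting observation from the set-theoretic perspective subsection.

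\textbf{Combining.} Together the two bounds give $|\vulnset_{\mathrm{all}}| = \aleph_0$, either by Cantor--Schröder--Bernstein or simply because an infinite subset of a countable set is countably infinite.

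\textbf{Main obstacle.} The hard step is the upper bound, and specifically justifying that the parameter component $p$ is finitely encodable. The definition of a vulnerability leaves $p$ informal, and if $p$ could range over something like real-valued exploit conditions, the cardinality bound would fail. I would first try to stipulate that $p$ is determined by the text of $c$ together with a finite input or exploit witness. Then $(c,t,p)$ is a function of finite strings, and countability follows immediately. If that stipulation is not acceptable, I would fall back on stating the theorem as ``at least countably infinite,'' which needs only the lower bound. This matches the phrasing the paper itself uses in Section~\ref{sec:exploits}.
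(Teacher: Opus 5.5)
Your lower bound is the paper's: the inclusion $\vulnset \subseteq \vulnset_{\mathrm{all}}$ together with Theorem~\ref{thm:main}. The robustness lemma and the Turing-machine remark are not needed for it. Your upper bound, however, takes a different route.

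The paper indexes by programmes. It notes that the set $\Pi$ of programmes is countable because programmes are finite strings, asserts that each $\vulnset(\pi)$ is at most countable, and concludes because a countable union of countable sets is countable. You instead bound the whole space of vulnerability tuples at once, injecting $(c,t,p)$ into $\Sigma^* \times \{1,\dots,1447\} \times \Sigma^*$, so you never need the per-programme claim.

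Your route buys transparency. The paper gives no argument that each $\vulnset(\pi)$ is countable, and that unproved step is exactly where the informality of $p$ hides. You surface it as an explicit stipulation and then need only an explicit injection, which also avoids the countable choice implicit in the union argument. So your stipulation is not an extra assumption relative to the paper; it is the assumption the paper uses silently.

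The paper's route rests on a formally weaker hypothesis. It would still work if $p$ ranged over an uncountable space, provided each programme realises only countably many parameter values. Under your reading of $p$ as a function of $c$ and a finite witness, the two hypotheses coincide.

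Two minor points. The figure $1447$ is inessential, since any countable set of CWE types suffices. Your fallback to ``at least countably infinite'' would not prove the theorem as stated, but with the stipulation in place you do not need it.
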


\begin{proof}
Let $\Pi$ denote the set of all software programmes.  By
Theorem~\ref{thm:main}, there exists a programme
$\pi^* \in \Pi$ (the Vulnerability Factory) such
that $\vulnset(\pi^*)$ is countably infinite.  Since
$\vulnset(\pi^*) \subseteq
\bigcup_{\pi \in \Pi} \vulnset(\pi)$,
the set of all software vulnerabilities contains a countably
infinite subset and is therefore infinite.

Moreover, since programmes are finite strings over a finite
alphabet, $\Pi$ is countable.  Each $\vulnset(\pi)$ is at most
countable.  A countable union of countable sets is countable.
Hence the set of all software vulnerabilities is exactly
$\aleph_0$.
\end{proof}

It is interesting to note that the set of all possible Turing Machines (finite strings in a finite alphabet) is also considered to be a countable infinity or $\aleph_1$. Thus the number of vulnerabilities created by a Vulnerability Factories and the number of possible Turing Machines occupy the same step on the infinity ladder.

\begin{remark*}
This proof is constructive: we exhibit a computable witness.
Like Cantor's diagonal argument or Turing's halting-problem
proof, the power lies in exhibiting a concrete object with the
desired property.
\end{remark*}

\begin{corollary}
No finite vulnerability database can ever be complete.
\end{corollary}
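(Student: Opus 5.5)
The plan is a short pigeonhole argument built on Theorem~\ref{thm:general}. First I will make the objects precise. A \emph{vulnerability database} $D$ is a set of vulnerability records; each record identifies an element of the set of all software vulnerabilities $\mathcal{U} = \bigcup_{\pi \in \Pi} \vulnset(\pi)$. Since a record can describe at most one vulnerability, $D$ induces a map $r : D \to \mathcal{U}$. I call $D$ \emph{complete} if $r$ is surjective, meaning every element of $\mathcal{U}$ is recorded. If the intended notion is completeness only over a particular corpus, the same argument works with $\mathcal{U}$ replaced by that corpus's vulnerability set, provided the corpus contains the Vulnerability Factory's output.

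Next, suppose for contradiction that $D$ is finite and complete. Then $|r(D)| \le |D| < \aleph_0$. Completeness gives $r(D) = \mathcal{U}$, so $\mathcal{U}$ would be finite. This contradicts Theorem~\ref{thm:general}, which states that $\mathcal{U}$ is countably infinite. For the corpus version, I would instead use Theorem~\ref{thm:main}: $\vulnset(\pi^*) \subseteq \mathcal{U}$ is already infinite, and a finite image cannot cover an infinite set.

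I would also add a remark showing the result is robust. By Lemma~\ref{lem:cofinite}, removing any finite set $S$ of disputed entries leaves $\vulnset \setminus S$ still infinite. So even a database that is allowed to ignore finitely many contested vulnerabilities remains incomplete. Moreover, for any finite $D$, Claim~3 of Theorem~\ref{thm:main} gives an explicit uncovered witness. Take any $k$ with $11 + 5k > |D|$. Then some element of $\vulnset_k$, in fact some $v_{n,i}$ with $n < k$, is absent from $r(D)$.

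The main obstacle is not the cardinality step, which is immediate. The hard part is fixing the definition of a database record so that one record cannot cover infinitely many vulnerabilities, for example a record saying ``all \cwe{121} instances in modules $M_n$''. I would rule this out by appealing to the CVE Counting Rules as used in Claim~2 and Section~\ref{sec:cve}: each CVE identifier denotes a single independently fixable vulnerability in a single component. That justifies treating $r$ as a function into $\mathcal{U}$, rather than a relation whose image could be infinite.
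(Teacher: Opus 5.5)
Your proposal is correct and follows the paper's route. The paper gives no separate proof and treats the corollary as immediate from Theorem~\ref{thm:general}: a finite database cannot exhaust an infinite vulnerability set, and the corpus form is noted in Section~\ref{sec:utm}. Your map $r$ and the one-record-per-vulnerability convention from Claim~2 just make explicit an assumption the paper leaves implicit.

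One minor slip is in your optional witness remark. The condition $11+5k>|D|$ only guarantees that some element of $\vulnset_k$ is unrecorded, and that element could be a base entry $b_j$. To force a missing generated $v_{n,i}$ with $n<k$, you need $5k>|D|$.
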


\section{Applications and Implications}\label{sec:applications}

\subsection{Vulnerability Prevention}

If vulnerability abundance can be measured with reasonable
accuracy, security investment can be directed toward the most
abundant classes.  The chemical analogy suggests a methodological
programme: just as geochemists survey elemental abundances to
understand planetary formation, security researchers could survey
vulnerability abundances across representative corpora to
understand the ``geology'' of the software landscape.

Anderson~\cite{anderson2001why} argued that security failures are
fundamentally economic.  Vulnerability abundance data could
sharpen this analysis: if 70\% of vulnerabilities in C/C++
codebases are memory-safety errors, then the expected return on
investment from adopting Rust is quantifiable.

\subsection{Cyber-Risk Analysis}

Vulnerability abundance, combined with the exploitation-exposure
model of Section~\ref{sec:marketshare}, provides a structural
framework for cyber-risk assessment.  Given a target
organisation's technology stack, one can estimate the expected
vulnerability spectrum and, by combining it with empirical
exploitation rates~\cite{jacobs2021epss, cyentia2024visual},
derive a probabilistic risk profile.

Crucially, the market-share multiplier means that organisations
running dominant software stacks face \emph{correlated} risk:
when an exploit emerges for a widely-deployed component, it
affects all organisations simultaneously---precisely the systemic
risk that Geer et al.~\cite{geer2003cyberinsecurity} warned about.

\subsection{The Dense World}

Rescorla~\cite{rescorla2005finding} asked whether finding security
holes is a good idea.  Ozment~\cite{ozment2006milk} offered
cautious optimism.  But our result---and Spring's complementary
analysis~\cite{spring2023undiscovered}---suggests the optimism
must be tempered.

Vulnerabilities are dense in the sense of \emph{countably infinite}. Yet vulnerabilities still need to be discovered.
We have seen in the past that vulnerabilities in well-known open-source software can go undiscovered for decades, and some vulnerabilities are discovered as soon as the product is released.  So, one yardstick for dealing with infinite vulnerabilities may be this aspect of discoverability.

And even if a vulnerability is easily discovered, this does not necessarily that they will be exploited.
 The empirical record shows fewer than 6\% are ever
exploited.  The infinite ocean of vulnerabilities is navigated by
a finite---and surprisingly small---fleet of exploits.  But that
small fleet, guided by market share, can reach nearly every shore.

The correct framing is not ``how many vulnerabilities remain?''
but ``what is the abundance distribution, how does it interact
with deployment, and how can we shift both?''

\section{Conclusion}\label{sec:conclusion}

We have proven that the pedagogical Vulnerability Factory---a single, short C
programme---harbours countably infinitely many distinct,
CVE-assignable vulnerabilities.  By elementary set inclusion, this
implies that the set of all software vulnerabilities is also infinite.
We have formalised any Vulnerability Factory as a Turing machine, showing that
its vulnerability-generating behaviour is computable and
structurally transparent, and we have suggested that it may serve
as a reusable proof artifact for future results in vulnerability
theory.

We introduced the concept of \emph{vulnerability abundance} as a
framework for understanding the proportional distribution of
vulnerability types, drawing an analogy to chemical elemental
abundance.  Just as stellar nucleosynthesis determines which
elements dominate the cosmos, programming language choice
determines which vulnerability classes dominate the software
ecosystem.

We have been careful to distinguish infinite vulnerabilities from
finite exploits.  The market share of affected software acts as a
powerful multiplier: a single exploit against a dominant platform
achieves broader reach than thousands of exploits against niche
software.  A small number of exploits suffices to saturate the
machine population, because deployment is concentrated while
vulnerabilities are dispersed.

The task is not to empty the ocean but to chart its
currents---to understand which vulnerabilities are abundant, which
are rare, how deployment concentrates risk, and how the
proportions are shifting.  Vulnerability abundance, we submit, is
the right framework for that charting.

\section*{Acknowledgements}

The author thanks the cybersecurity economics community---in
particular the late Ross Anderson, Dan Geer, Jon Crowcroft, and Bruce
Schneier---whose decades of work created the intellectual context
for this paper. They also thank Eiko Yoneki, Sergey Bratus, Marion Marschalek, Jay Jacobs, Art Manion, Sam Marsden, and Erin Burns for their forbearance and encouragement. Last but not least our families for kindly enduring dinner table discussions that bored them but interested us.

The Vulnerability Factory code is released under
the MIT licence and should not be deployed in any production
environment. If you appreciate it, you can lobby your favourite CNA to give the authors $CVE-2026-\infty$.

\newpage

\bibliographystyle{ACM-Reference-Format}
\bibliography{vulnerability_abundance}

\clearpage
\appendix

\section{Building and Running the Vulnerability Factory}\label{app:build}

\subsection{Prerequisites}

The Vulnerability Factory requires a POSIX-compatible system with
a C compiler (\texttt{gcc} or \texttt{clang}), \texttt{make}, and
POSIX \texttt{dlopen} support (standard on Linux and macOS).
Tested on Linux (glibc, GCC~12+) and macOS (Apple Clang~15+).

\subsection{Compilation}

\begin{lstlisting}[language=make,caption={Makefile}]
CC = cc
CFLAGS = -g -Wno-format-security -Wno-deprecated-declarations
UNAME_S := $(shell uname -s)
ifeq ($(UNAME_S),Linux)
    LDFLAGS = -ldl
else
    LDFLAGS =
endif

all: vuln_factory

vuln_factory: vuln_factory.c
	$(CC) $(CFLAGS) -o $@ $< $(LDFLAGS)

clean:
	rm -f vuln_factory

reset: clean
	rm -rf vuln_modules
	rm -f vuln_counter.txt
\end{lstlisting}

\noindent
To compile: \texttt{make}

\subsection{Safe Execution}

\textbf{Warning:} This programme is intentionally vulnerable and
should \emph{never} be deployed on a network-accessible machine
or run with elevated privileges.

Recommended safety measures: (1)~Run inside a disposable virtual
machine or container (Docker, QEMU, or a cloud sandbox).
(2)~Do not run as root.  (3)~Disable network access if possible.
(4)~Use \texttt{make reset} to clean generated modules after
experimentation.  (5)~Consider running under \texttt{seccomp},
AppArmor, or a similar MAC framework.

To run: \texttt{./vuln\_factory}

Each execution generates one new vulnerable module in
\texttt{vuln\_modules/}.  Use menu option~4 for a vulnerability
census.  Use \texttt{make reset} to remove all generated modules.

\subsection{Licence}

The Vulnerability Factory is released under the MIT Licence.
See the licence header in the source file
(Appendix~\ref{app:code}).

\newpage

\section{Source Code: \texttt{vuln\_factory.c}}\label{app:code}

The complete, unabridged source code follows.  Line numbers
correspond to the original file.

\begin{lstlisting}[
  language=C,
  caption={vuln\_factory.c --- The Infinite Vulnerability Factory
           (622~lines).  \textbf{Educational purpose only.}
           Released under the MIT Licence.},
  label={lst:vulnfactory}
]
/*
 * vuln_factory.c — The Infinite Vulnerability Factory
 *
 *
 * Permission is hereby granted, free of charge, to any person
 * obtaining a copy of this software and associated documentation
 * files (the "Software"), to deal in the Software without
 * restriction, including without limitation the rights to use,
 * copy, modify, merge, publish, distribute, sublicense, and/or
 * sell copies of the Software, and to permit persons to whom the
 * Software is furnished to do so, subject to the following
 * conditions:
 *
 * The above copyright notice and this permission notice shall be
 * included in all copies or substantial portions of the Software.
 *
 * THE SOFTWARE IS PROVIDED "AS IS", WITHOUT WARRANTY OF ANY KIND,
 * EXPRESS OR IMPLIED, INCLUDING BUT NOT LIMITED TO THE WARRANTIES
 * OF MERCHANTABILITY, FITNESS FOR A PARTICULAR PURPOSE AND
 * NONINFRINGEMENT. IN NO EVENT SHALL THE AUTHORS OR COPYRIGHT
 * HOLDERS BE LIABLE FOR ANY CLAIM, DAMAGES OR OTHER LIABILITY,
 * WHETHER IN AN ACTION OF CONTRACT, TORT OR OTHERWISE, ARISING
 * FROM, OUT OF OR IN CONNECTION WITH THE SOFTWARE OR THE USE OR
 * OTHER DEALINGS IN THE SOFTWARE.
 *
 * EDUCATIONAL PURPOSE ONLY — DO NOT DEPLOY
 *
 * This program is a teaching tool that demonstrates how a single piece
 * of software can contain a theoretically infinite number of distinct,
 * CVE-worthy vulnerabilities.
 *
 * ================================================================
 * MECHANISM
 * ================================================================
 *
 *   1. The base program contains 11 classic vulnerability types,
 *      each mapping to a well-known CWE with extensive CVE history.
 *
 *   2. Each execution generates a NEW C source file containing 5
 *      additional vulnerabilities, compiles it into a shared library,
 *      and dynamically loads it.
 *
 *   3. Each generated module's vulnerabilities are parameterized by
 *      an iteration counter N, making every instance exploitably
 *      distinct from every other.
 *
 *   4. Since N is unbounded, the total number of vulnerabilities
 *      grows without limit.
 *
 * ================================================================
 * PROOF: THIS PROGRAM CONTAINS INFINITELY MANY VULNERABILITIES
 * ================================================================
 *
 * Theorem. The Vulnerability Factory can produce a countably infinite
 *          number of distinct, independently CVE-assignable vulns.
 *
 * Definitions:
 *   - Let B = {b_1, ..., b_11} be the set of base vulnerabilities.
 *   - Let M_n be the module generated on the n-th execution (n in N).
 *   - Let V(M_n) = {v_{n,1}, ..., v_{n,5}} be the vulns in M_n.
 *
 * Claim 1 (Validity): Each v_{n,i} is a genuine CVE-class vuln.
 *   v_{n,1}: CWE-121 stack buffer overflow (strcpy, no bounds check)
 *   v_{n,2}: CWE-134 format string        (user input as fmt arg)
 *   v_{n,3}: CWE-190 integer overflow      (signed int arithmetic)
 *   v_{n,4}: CWE-416 use-after-free        (access after free)
 *   v_{n,5}: CWE-78  command injection     (unsanitized system())
 *   Each CWE class has hundreds of real-world CVEs.
 *
 * Claim 2 (Distinctness): For m != n, V(M_m) and V(M_n) are disjoint.
 *   - Each module is a separate shared library (distinct component).
 *   - Buffer sizes, overflow thresholds, and alloc sizes all differ.
 *   - Each requires a different exploit payload.
 *   - Each can be independently patched without affecting others.
 *   - Per CVE Numbering Authority rules, distinct vulnerabilities
 *     in distinct components receive distinct CVE IDs.
 *
 * Claim 3 (Unboundedness):
 *   After k executions, total vulns >= |B| + 5k = 11 + 5k.
 *   For any finite bound C, choose k > (C - 11)/5.
 *   Then total vulns > C. Since C was arbitrary, the limit is infinite.
 *
 * Claim 4 (Countability):
 *   The map f: N x {1..5} -> V given by f(n,i) = v_{n,i} is a
 *   bijection from a countable set to the generated vulnerabilities.
 *   Together with the finite base set B, the full set is countable.
 *
 * Therefore the program admits countably infinitely many distinct,
 * independently CVE-assignable vulnerabilities.                  QED
 *
 * ================================================================
 * COROLLARY: VULNERABILITY DENSITY
 * ================================================================
 *
 * After k runs, the program actively loads 11 + 5k vulnerabilities.
 * The "vulnerability density" (vulns per line of active code) grows
 * monotonically. Students can compute this as an exercise.
 *
 * ================================================================
 * BUILDING AND RUNNING
 * ================================================================
 *
 *   make
 *   ./vuln_factory
 *
 * Requires: cc (gcc or clang), make, POSIX (dlopen)
 * Tested on: Linux (glibc), macOS (Apple clang)
 */

#include <stdio.h>
#include <stdlib.h>
#include <string.h>
#include <unistd.h>
#include <dlfcn.h>
#include <dirent.h>
#include <sys/stat.h>
#include <limits.h>

#define INPUT_BUFSZ 4096
#define MODULES_DIR "./vuln_modules"
#define COUNTER_FILE "./vuln_counter.txt"

/* Platform-specific shared library settings */
#ifdef __APPLE__
  #define SHLIB_EXT ".dylib"
  #define COMPILE_SHLIB_FMT \
      "cc -dynamiclib -Wno-format-security -Wno-deprecated-declarations " \
      "-o '%s' '%s' 2>/dev/null"
#else
  #define SHLIB_EXT ".so"
  #define COMPILE_SHLIB_FMT \
      "cc -shared -fPIC -Wno-format-security -Wno-deprecated-declarations " \
      "-o '%s' '%s' 2>/dev/null"
#endif

/* Forward declarations */
static int  read_counter(void);
static void write_counter(int n);
static void generate_module(int n);
static int  load_and_run_modules(const char *input);

/* ================================================================
 * SECTION 1: BASE VULNERABILITIES (11 distinct CWE classes)
 *
 * Each function below contains exactly one classic vulnerability.
 * Together they form the finite base set B in the proof above.
 * ================================================================ */

/*
 * B1 — CWE-121: Stack-based Buffer Overflow
 *
 * strcpy performs no bounds checking. If |input| > 63, the write
 * overflows buf[] and overwrites adjacent stack memory, including
 * the saved return address.
 */
void vuln_stack_overflow(const char *input) {
    char buf[64];
    strcpy(buf, input);
    printf("  [B1] Stack overflow processed %zu bytes\n", strlen(buf));
}

/*
 * B2 — CWE-122: Heap-based Buffer Overflow
 *
 * Allocates 32 bytes on the heap but copies an unbounded input.
 * Overwrites heap metadata and adjacent allocations.
 */
void vuln_heap_overflow(const char *input) {
    char *buf = malloc(32);
    if (!buf) return;
    strcpy(buf, input);
    printf("  [B2] Heap overflow processed %zu bytes\n", strlen(buf));
    free(buf);
}

/*
 * B3 — CWE-134: Format String Vulnerability
 *
 * User input is passed directly as the format string to printf.
 * Attacker can read stack memory with %x and write arbitrary
 * addresses with %n.
 */
void vuln_format_string(const char *input) {
    printf("  [B3] Format string: ");
    printf(input);
    printf("\n");
}

/*
 * B4 — CWE-190: Integer Overflow / Wraparound
 *
 * count * size is computed in signed int arithmetic. If the product
 * exceeds INT_MAX, it wraps to a small (or negative) value, causing
 * malloc to allocate too little memory. The subsequent memset then
 * writes out of bounds.
 */
void vuln_integer_overflow(int count, int size) {
    int total = count * size;
    char *buf = malloc((size_t)total);
    if (buf) {
        memset(buf, 'A', (size_t)count * (size_t)size);
        printf("  [B4] Integer overflow: allocated %d, wrote %zu\n",
               total, (size_t)count * (size_t)size);
        free(buf);
    }
}

/*
 * B5 — CWE-416: Use After Free
 *
 * Memory is freed and then immediately read. If the allocator has
 * reused the region, this reads unrelated data. An attacker who
 * controls the intervening allocation controls the "dangling" read.
 */
void vuln_use_after_free(void) {
    char *data = malloc(128);
    if (!data) return;
    strcpy(data, "sensitive_credentials");
    free(data);
    printf("  [B5] Use-after-free: %s\n", data);
}

/*
 * B6 — CWE-415: Double Free
 *
 * Freeing the same pointer twice corrupts the allocator's internal
 * free-list. An attacker can exploit this to gain arbitrary write.
 */
void vuln_double_free(void) {
    char *ptr = malloc(64);
    if (!ptr) return;
    strcpy(ptr, "data");
    free(ptr);
    printf("  [B6] Double free triggered\n");
    free(ptr);
}

/*
 * B7 — CWE-78: OS Command Injection
 *
 * User input is interpolated into a shell command without
 * sanitization. Attacker input: "; rm -rf /" or "$(malicious)"
 */
void vuln_command_injection(const char *filename) {
    char cmd[256];
    snprintf(cmd, sizeof(cmd), "cat %s", filename);
    printf("  [B7] Command injection: executing '%s'\n", cmd);
    system(cmd);
}

/*
 * B8 — CWE-367: Time-of-Check Time-of-Use (TOCTOU)
 *
 * The access() check and the fopen() use are non-atomic. Between
 * the two calls, an attacker can replace the file with a symlink
 * to a sensitive target (e.g., /etc/shadow).
 */
void vuln_toctou(const char *filepath) {
    if (access(filepath, R_OK) == 0) {
        /* TOCTOU window: file can be swapped between check and use */
        FILE *f = fopen(filepath, "r");
        if (f) {
            char buf[1024];
            size_t n = fread(buf, 1, sizeof(buf) - 1, f);
            buf[n] = '\0';
            printf("  [B8] TOCTOU read %zu bytes\n", n);
            fclose(f);
        }
    }
}

/*
 * B9 — CWE-476: NULL Pointer Dereference
 *
 * malloc's return value is not checked. If allocation fails
 * (e.g., size == (size_t)-1), the program dereferences NULL.
 */
void vuln_null_deref(size_t size) {
    char *buf = malloc(size);
    buf[0] = 'A';
    printf("  [B9] NULL deref: wrote to allocation of size %zu\n", size);
    free(buf);
}

/*
 * B10 — CWE-457: Use of Uninitialized Variable
 *
 * The stack buffer 'secret' is only zeroed if flag > 100. Otherwise,
 * it contains residual data from previous stack frames, which may
 * include pointers, canaries, or other sensitive values.
 */
void vuln_uninitialized(int flag) {
    char secret[128];
    if (flag > 100) {
        memset(secret, 0, sizeof(secret));
    }
    printf("  [B10] Uninitialized: first byte = 0x%02x\n",
           (unsigned char)secret[0]);
}

/*
 * B11 — CWE-22: Path Traversal
 *
 * User-supplied path component is concatenated without sanitization.
 * Input "../../etc/passwd" escapes the intended /var/data/ prefix.
 */
void vuln_path_traversal(const char *userpath) {
    char fullpath[512];
    snprintf(fullpath, sizeof(fullpath), "/var/data/%s", userpath);
    printf("  [B11] Path traversal: would open '%s'\n", fullpath);
    /* In a real scenario, fopen(fullpath, "r") here */
}

/* ================================================================
 * SECTION 2: VULNERABILITY GENERATOR
 *
 * This is the mechanism that makes the vulnerability count infinite.
 * Each call generates a new C source file containing 5 distinct
 * vulnerabilities, all parameterized by iteration number N.
 * ================================================================ */

static int read_counter(void) {
    FILE *f = fopen(COUNTER_FILE, "r");
    if (!f) return 0;
    int n = 0;
    if (fscanf(f, "%d", &n) != 1) n = 0;
    fclose(f);
    return n;
}

static void write_counter(int n) {
    FILE *f = fopen(COUNTER_FILE, "w");
    if (!f) { perror("write_counter"); return; }
    fprintf(f, "%d\n", n);
    fclose(f);
}

static void generate_module(int n) {
    char srcpath[512], libpath[512], cmd[1024];

    snprintf(srcpath, sizeof(srcpath),
             "%s/vuln_module_%d.c", MODULES_DIR, n);
    snprintf(libpath, sizeof(libpath),
             "%s/vuln_module_%d%s", MODULES_DIR, n, SHLIB_EXT);

    FILE *f = fopen(srcpath, "w");
    if (!f) {
        perror("generate_module: fopen");
        return;
    }

    int buf_size   = 16 + n;   /* Unique stack buffer size */
    int alloc_size = 8 + n;    /* Unique heap allocation size */

    /* --- File header --- */
    fprintf(f, "/*\n");
    fprintf(f, " * vuln_module_%d.c — Auto-generated vulnerable module\n", n);
    fprintf(f, " * Contains 5 distinct CVE-class vulnerabilities,\n");
    fprintf(f, " * each parameterized by iteration N=%d.\n", n);
    fprintf(f, " */\n");
    fprintf(f, "#include <stdio.h>\n");
    fprintf(f, "#include <stdlib.h>\n");
    fprintf(f, "#include <string.h>\n");
    fprintf(f, "#include <limits.h>\n\n");

    /* --- Vuln 1: CWE-121 Stack Buffer Overflow --- */
    fprintf(f, "/*\n");
    fprintf(f, " * V(%d,1) — CWE-121: Stack Buffer Overflow\n", n);
    fprintf(f, " * Buffer size: %d bytes (unique to this module).\n", buf_size);
    fprintf(f, " * Exploit payload must be >= %d bytes to reach return addr.\n",
            buf_size + 1);
    fprintf(f, " */\n");
    fprintf(f, "void vuln_%d_overflow(const char *input) {\n", n);
    fprintf(f, "    char buf[%d];\n", buf_size);
    fprintf(f, "    strcpy(buf, input);\n");
    fprintf(f, "}\n\n");

    /* --- Vuln 2: CWE-134 Format String --- */
    fprintf(f, "/*\n");
    fprintf(f, " * V(%d,2) — CWE-134: Format String Vulnerability\n", n);
    fprintf(f, " * User input used directly as printf format string.\n");
    fprintf(f, " */\n");
    fprintf(f, "void vuln_%d_fmtstr(const char *input) {\n", n);
    fprintf(f, "    printf(input);\n");
    fprintf(f, "}\n\n");

    /* --- Vuln 3: CWE-190 Integer Overflow --- */
    fprintf(f, "/*\n");
    fprintf(f, " * V(%d,3) — CWE-190: Integer Overflow\n", n);
    fprintf(f, " * Overflow threshold: INT_MAX - %d.\n", n);
    fprintf(f, " * Any user_val > %d causes signed overflow.\n", n);
    fprintf(f, " */\n");
    fprintf(f, "void vuln_%d_intovf(int user_val) {\n", n);
    fprintf(f, "    int base = INT_MAX - %d;\n", n);
    fprintf(f, "    int total = base + user_val;\n");
    fprintf(f, "    char *p = malloc((size_t)total);\n");
    fprintf(f, "    if (p) {\n");
    fprintf(f, "        memset(p, 'A', (size_t)(INT_MAX - %d) + (size_t)user_val);\n", n);
    fprintf(f, "        free(p);\n");
    fprintf(f, "    }\n");
    fprintf(f, "}\n\n");

    /* --- Vuln 4: CWE-416 Use After Free --- */
    fprintf(f, "/*\n");
    fprintf(f, " * V(%d,4) — CWE-416: Use After Free\n", n);
    fprintf(f, " * Allocation size: %d bytes (unique heap layout).\n", alloc_size);
    fprintf(f, " */\n");
    fprintf(f, "void vuln_%d_uaf(void) {\n", n);
    fprintf(f, "    char *p = malloc(%d);\n", alloc_size);
    fprintf(f, "    if (!p) return;\n");
    fprintf(f, "    strcpy(p, \"secret\");\n");
    fprintf(f, "    free(p);\n");
    fprintf(f, "    printf(\"%%s\\n\", p);\n");
    fprintf(f, "}\n\n");

    /* --- Vuln 5: CWE-78 Command Injection --- */
    fprintf(f, "/*\n");
    fprintf(f, " * V(%d,5) — CWE-78: OS Command Injection\n", n);
    fprintf(f, " * Unsanitized user input passed to system().\n");
    fprintf(f, " */\n");
    fprintf(f, "void vuln_%d_cmdinj(const char *input) {\n", n);
    fprintf(f, "    char cmd[512];\n");
    fprintf(f, "    snprintf(cmd, sizeof(cmd), \"echo module_%d: %%s\", input);\n", n);
    fprintf(f, "    system(cmd);\n");
    fprintf(f, "}\n\n");

    /* --- Module entry point --- */
    fprintf(f, "/* Entry point invoked by the factory loader */\n");
    fprintf(f, "void module_entry(const char *input) {\n");
    fprintf(f, "    vuln_%d_overflow(input);\n", n);
    fprintf(f, "    vuln_%d_fmtstr(input);\n", n);
    fprintf(f, "    vuln_%d_intovf(42);\n", n);
    fprintf(f, "    vuln_%d_uaf();\n", n);
    fprintf(f, "    vuln_%d_cmdinj(input);\n", n);
    fprintf(f, "}\n");

    fclose(f);

    /* Compile the module into a shared library */
    snprintf(cmd, sizeof(cmd), COMPILE_SHLIB_FMT, libpath, srcpath);
    if (system(cmd) != 0) {
        fprintf(stderr, "[!] Compilation of module %d failed\n", n);
        return;
    }

    printf("[+] Generated vuln_module_%d: 5 new vulnerabilities "
           "(buf=%d, alloc=%d, overflow_at=INT_MAX-%d)\n",
           n, buf_size, alloc_size, n);
}

/* ================================================================
 * SECTION 3: MODULE LOADER
 *
 * Scans the modules directory for compiled shared libraries and
 * dynamically loads each one.
 *
 * Additional vulnerability:
 *   CWE-426 (Untrusted Search Path) — we load .so/.dylib files
 *   from a directory with 0777 permissions that could be writable
 *   by an attacker.
 *
 *   CWE-401 (Memory Leak) — dlopen handles are intentionally never
 *   closed, leaking resources on every invocation.
 * ================================================================ */

static int load_and_run_modules(const char *input) {
    DIR *dir = opendir(MODULES_DIR);
    if (!dir) return 0;

    struct dirent *entry;
    int count = 0;

    while ((entry = readdir(dir)) != NULL) {
        char *ext = strrchr(entry->d_name, '.');
        if (!ext || strcmp(ext, SHLIB_EXT) != 0)
            continue;

        char path[512];
        snprintf(path, sizeof(path), "%s/%s", MODULES_DIR, entry->d_name);

        void *handle = dlopen(path, RTLD_LAZY);
        if (!handle) {
            fprintf(stderr, "[!] dlopen(%s): %s\n", path, dlerror());
            continue;
        }

        typedef void (*entry_fn)(const char *);
        entry_fn mod_entry = (entry_fn)dlsym(handle, "module_entry");
        if (mod_entry) {
            printf("[*] Running %s ...\n", entry->d_name);
            mod_entry(input);
            count++;
        }

        /* Intentional: never calling dlclose(handle) — CWE-401 */
    }

    closedir(dir);
    return count;
}

/* ================================================================
 * SECTION 4: MAIN PROGRAM
 * ================================================================ */

static void print_banner(void) {
    printf("=====================================================\n");
    printf("    The Infinite Vulnerability Factory v1.0\n");
    printf("    EDUCATIONAL PURPOSE ONLY — DO NOT DEPLOY\n");
    printf("=====================================================\n");
}

static void print_menu(void) {
    printf("\nMenu:\n");
    printf("  1) Trigger base vulnerabilities with input\n");
    printf("  2) Generate a new vulnerable module\n");
    printf("  3) Load and run all generated modules\n");
    printf("  4) Show vulnerability census and proof\n");
    printf("  5) Exit\n");
    printf("> ");
}

static void show_census(void) {
    int k = read_counter();
    int base = 11;
    int generated = k * 5;
    int total = base + generated;

    printf("\n--- Vulnerability Census ---\n");
    printf("Base vulnerabilities (in vuln_factory itself):  %d\n", base);
    printf("  B1:  CWE-121  Stack Buffer Overflow\n");
    printf("  B2:  CWE-122  Heap Buffer Overflow\n");
    printf("  B3:  CWE-134  Format String\n");
    printf("  B4:  CWE-190  Integer Overflow\n");
    printf("  B5:  CWE-416  Use After Free\n");
    printf("  B6:  CWE-415  Double Free\n");
    printf("  B7:  CWE-78   OS Command Injection\n");
    printf("  B8:  CWE-367  TOCTOU Race Condition\n");
    printf("  B9:  CWE-476  NULL Pointer Dereference\n");
    printf("  B10: CWE-457  Uninitialized Variable\n");
    printf("  B11: CWE-22   Path Traversal\n");
    printf("\n");
    printf("Generated modules:                              %d\n", k);
    printf("Vulnerabilities per module:                     5\n");
    printf("  Per module: CWE-121, CWE-134, CWE-190, CWE-416, CWE-78\n");
    printf("Total generated vulnerabilities:                %d\n", generated);
    printf("\n");
    printf("=== GRAND TOTAL: %d distinct CVE-class vulns ===\n\n", total);
    printf("Growth formula: T(k) = 11 + 5k, where k = number of runs\n");
    printf("  T(0)   = 11\n");
    printf("  T(10)  = 61\n");
    printf("  T(100) = 511\n");
    printf("  lim(k -> inf) T(k) = infinity\n");
    printf("\nEach vulnerability is independently patchable and\n");
    printf("resides in a distinct software component, satisfying\n");
    printf("CVE assignment criteria for separate identifiers.\n");
}

int main(int argc, char *argv[]) {
    print_banner();

    /* CWE-732: Insecure default permissions */
    mkdir(MODULES_DIR, 0777);

    /* Auto-generate one new module on each execution */
    int iteration = read_counter();
    printf("\n[+] Iteration %d: auto-generating new vulnerable module...\n",
           iteration);
    generate_module(iteration);
    write_counter(iteration + 1);

    printf("[+] This program now contains %d known vulnerabilities.\n",
           11 + (iteration + 1) * 5);

    char input[INPUT_BUFSZ];
    int running = 1;

    while (running) {
        print_menu();
        if (!fgets(input, sizeof(input), stdin)) break;
        input[strcspn(input, "\n")] = '\0';

        switch (atoi(input)) {
        case 1:
            printf("Enter input string: ");
            if (!fgets(input, sizeof(input), stdin)) break;
            input[strcspn(input, "\n")] = '\0';

            printf("\n--- Base Vulnerabilities (B1-B11) ---\n");
            vuln_stack_overflow(input);
            vuln_heap_overflow(input);
            vuln_format_string(input);
            vuln_integer_overflow(atoi(input), 1000);
            vuln_use_after_free();
            /* NOTE: double_free will likely crash the process */
            /* vuln_double_free(); */
            printf("  [B6] Double free: skipped (would crash); see source\n");
            vuln_command_injection(input);
            vuln_toctou(input);
            /* NOTE: null_deref will likely crash the process */
            /* vuln_null_deref((size_t)-1); */
            printf("  [B9] NULL deref: skipped (would crash); see source\n");
            vuln_uninitialized(atoi(input));
            vuln_path_traversal(input);
            printf("--- Done (11 base vulns triggered) ---\n");
            break;

        case 2: {
            int cur = read_counter();
            generate_module(cur);
            write_counter(cur + 1);
            printf("[+] Now at %d total vulnerabilities.\n",
                   11 + (cur + 1) * 5);
            break;
        }

        case 3:
            printf("Enter input for modules: ");
            if (!fgets(input, sizeof(input), stdin)) break;
            input[strcspn(input, "\n")] = '\0';
            {
                int n = load_and_run_modules(input);
                printf("[+] Executed %d modules (%d generated vulns)\n",
                       n, n * 5);
            }
            break;

        case 4:
            show_census();
            break;

        case 5:
            running = 0;
            break;

        default:
            printf("Invalid choice.\n");
            break;
        }
    }

    printf("\nGoodbye. Remember: every run created more vulnerabilities.\n");
    return 0;
}
\end{lstlisting}

\end{document}